\DeclareMathAlphabet{\mathcal}{OMS}{cmsy}{m}{n}
\tikzstyle{block} = [draw, thick, node distance=0.5cm, minimum width=1cm, inner sep=6pt]
\tikzstyle{sum} = [draw, thick, circle, node distance=1cm, inner sep=3.5pt, path picture={\node at (path picture bounding box.center) [draw, anchor = center] {$+$};}]
\newcommand{\N}{\mathbb{N}}
\newcommand{\x}{x}
\newcommand{\p}{p}
\newcommand{\w}{w}
\renewcommand{\u}{u}
\newcommand{\q}{q}
\newcommand{\z}{z}
\newcommand{\filty}{s}
\renewcommand{\t}{t}
\renewcommand{\k}{k}
\newcommand{\A}[1]{A_{#1}}
\newcommand{\B}[2]{B_{#1}^{#2}}
\newcommand{\C}[2]{C_{#1}^{#2}}
\newcommand{\D}[3]{D_{#1}^{#2#3}}
\newcommand{\G}{G}
\newcommand{\DG}{\Delta \star \G}
\newcommand{\filt}{\Psi}
\newcommand{\all}{\Sigma}
\DeclareMathOperator{\diag}{diag}
\newcommand{\symb}[1]{\begin{bmatrix}\vphantom{#1}\bullet \end{bmatrix}^{\!\top}\! #1 }
\newcommand{\nx}{{n_\x}}
\let\greeknu\nu
\renewcommand{\nu}{{n_\u}}
\newcommand{\nw}{{n_\w}}
\newcommand{\nq}{{n_\q}}
\newcommand{\np}{{n_\p}}
\newcommand{\nz}{{n_\z}}
\newcommand{\nfilty}{{n_\filty}}
\newcommand{\nfilt}{{n_\filtx}}
\newcommand{\filtx}{\psi}
\newcommand{\M}{M}
\newcommand{\X}{X}
\renewcommand{\P}{P}
\newcommand{\R}{\mathbb{R}}
\renewcommand{\S}{\mathbb{S}}
\newcommand{\RHinf}{\mathbb{RH}_\infty}
\newcommand{\ssrep}[4]{\left[\begin{array}{c|c} #1 & #2 \\ \hline #3 & #4\end{array}\right]}
\newcommand{\ssrepflex}[4]{\left[\begin{array}{#1|#2} #3 \\[0.05cm] \hline\rule{0pt}{2.45ex} #4\end{array}\right]}
\newcommand{\peak}{{\mathrm{peak}}}
\newcommand{\indpeak}{{\mathrm{peak\text{-}ind}}}
\newcommand{\Deltaset}{\boldsymbol{\Delta}}
\newcommand{\Mset}{\boldsymbol{\M}}
\newcommand{\MXset}{\boldsymbol{\M\X}}
\newcommand{\diagmat}[1]{{\arraycolsep=1pt\begin{bmatrix} #1\end{bmatrix}}}
\newenvironment{proof}{\begin{pf}}{\,\hfill $\square$\end{pf}}
\renewcommand{\refeq}[2]{\overset{\makebox[0pt][c]{\scriptsize #1}}{#2}}
\begin{document}
\begin{frontmatter}

\title{Robust peak-to-peak gain analysis\\ using integral quadratic constraints\thanksref{footnoteinfo}} 

\thanks[footnoteinfo]{
  F. Allgöwer and M. A. Müller are thankful that this work was funded by Deutsche Forschungsgemeinschaft (DFG, German Research Foundation) – AL 316/12-2 and MU 3929/1-2 - 279734922.  L. Schwenkel thanks the International Max Planck Research School for Intelligent Systems (IMPRS-IS) for supporting him.}

\author[First]{Lukas Schwenkel} 
\author[Second]{Johannes Köhler} 
\author[Third]{Matthias A. Müller}
\author[First]{Frank Allgöwer}

\address[First]{University of Stuttgart, Institute for Systems Theory and Automatic Control,\\ \ \hspace{-0.7cm} 70550 Stuttgart, Germany (e-mail: $\{$schwenkel, allgower$\}$@ist.uni-stuttgart.de)\hspace{-0.8cm}\ }
\address[Second]{ETH Zurich, Institute for Dynamical Systems and Control, ZH-8092, Switzerland (e-mail: jkoehle@ethz.ch)}
\address[Third]{Leibniz University Hannover, Institute of Automatic Control, 30167 Hannover, Germany (email: mueller@irt.uni-hannover.de)}

\begin{abstract}
  This work provides a framework to compute an upper bound on the robust peak-to-peak gain of discrete-time uncertain linear systems using integral quadratic constraints (IQCs).
Such bounds are of particular interest in the computation of reachable sets and the $\ell_1$-norm, as well as when safety-critical constraints need to be satisfied pointwise in time.
The use of $\rho$-hard IQCs with a terminal cost enables us to deal with a wide variety of uncertainty classes, for example, we provide $\rho$-hard IQCs with a terminal cost for the class of parametric uncertainties.
This approach unifies, generalizes, and significantly improves state-of-the-art methods, which is also demonstrated in a numerical example.
\end{abstract}

\begin{keyword}
  Robust control, integral quadratic constraints, peak-to-peak gain, $\ell_1$-norm, reachable set
\end{keyword}

\end{frontmatter}

\section{Introduction}
This work provides a framework to guarantee an upper bound on the robust peak-to-peak gain of discrete-time uncertain linear systems.
Such bounds are particularly interesting in safety critical systems where one needs to guarantee constraint satisfaction pointwise in time despite persistent disturbances, uncertainties, and noise.
The control community has been interested in peak-to-peak gains and the closely related $\ell_1$-norm since the early works of~\cite{Vidyasagar1986} and~\cite{Dahleh1987}.
Ten years later,~\cite{Abedor1996} proposed a computationally scalable (but approximate) solution based on linear matrix inequalities (LMIs).
All these works address the nominal problem without model uncertainties.
This restriction has been addressed by~\cite{Ji2007} and~\cite{Rieber2008}, where a robust bound on the peak-to-peak gain is computed for linear systems with parametric uncertainties.
In comparison, we propose a framework based on integral quadratic constraints (IQCs), which enables us to deal with a wide variety of uncertainty classes.
Furthermore, using IQCs we can reduce conservatism by exploiting additional knowledge on the structure or time-invariance of the uncertainty, and we can recover the results from~\cite{Ji2007} and~\cite{Rieber2008} as special cases.

IQCs were first introduced by~\cite{Megretski1997} and have proven to be an efficient tool to analyze uncertain systems (cf. the tutorial by~\cite{Veenman2016}).
In the literature, a distinction is made between \emph{hard} IQCs, which have to hold on all finite horizons $[0,T]$, $T\geq 0$, and \emph{soft} IQCs, which only have to hold over the infinite horizon $T\to \infty$.
\cite{Veenman2013} already conjectured that hard time-bounds, such as the desired peak-to-peak gain, can be derived by assuming hard IQCs and using a dissipativity-based proof.
However, hard IQCs are more restrictive than soft IQCs and thus often come with additional conservatism.
Hence,~\cite{Scherer2018} proposed to relax hard IQCs by using a \emph{terminal cost} and thereby removing some of the conservatism (see also~\citep{Scherer2022a} and~\citep{Scherer2022}).
Moreover, to bound the impact of persistent disturbances of possibly infinite energy,~\citet{Abedor1996},~\cite{Ji2007},~\cite{Rieber2008} used exponential stability bounds.
Similarly, exponential stability analysis within the IQC framework can be performed using $\rho$-hard IQCs as proposed by \cite{Lessard2016}.
Considering the above discussion, we utilize $\rho$\emph{-hard IQCs with a terminal cost} to analyze the robust peak-to-peak gain in a general setting.

\emph{Related work.} \cite{Jaoude2020} use IQCs to analyze the robust energy-to-peak gain, i.e., pointwise bound on the output given a bound on the energy of the disturbance signal.
Similarly,~\cite{Scherer2022} provides pointwise bounds on the output of an uncertain system described with IQCs assuming no disturbance but a norm-bound on the initial condition.
Furthermore, outer approximations of the reachable set have been provided by~\cite{Yin2020} and~\cite{Buch2021} using the IQC framework and assuming disturbances of finite energy.
All these works, however, do not allow for persistent disturbances. 
In the context of IQCs, only~\cite{Jaoude2021} deal with persistent disturbances and provide an outer approximation of the reachable set.
However, their approach requires pointwise IQCs whereas the proposed approach requires the less restrictive $\rho$-hard IQCs with a terminal cost.
We demonstrate in Example~\ref{exmp:jaude} that this can lead to less conservative results.

The results of this work are in particular relevant for the design of robust model predictive control (MPC) schemes, where constraints in terms of pointwise bounds on the output need to be satisfied despite uncertainties in the system and persistent disturbances.
In our previous works~\citep{Schwenkel2020} and~\citep{Schwenkel2022a}, we proposed a robust MPC scheme based on an outer approximation of the reachable set using $\rho$-hard IQCs.
In this work, we improve the approach to obtain outer approximations therein by allowing for $\rho$-hard IQCs \emph{with a terminal cost}.
Moreover, the proposed robust peak-to-peak gain analysis can be used in MPC to directly compute a suitable constraint tightening to ensure constraint satisfaction.
Even when the constraints are directly influenced by the uncertainty.

\emph{Outline.} After defining the problem setup in Section~\ref{sec:setup}, we present several contributions in this work.
In Section~\ref{sec:ana}, we provide a framework to give guaranteed bounds on the robust peak-to-peak gain of an uncertain system where the uncertain components are characterized by $\rho$-hard IQCs with a terminal cost. 
In addition, we also show how to compute an outer approximation of the reachable set. 
In Section~\ref{sec:param}, we provide $\rho$-hard IQCs with a terminal cost for the special case of parametric uncertainties and show that existing results are special cases of our approach. 
Finally, in Section~\ref{sec:exmp}, we demonstrate with two examples that our results have little conservatism and can significantly improve state-of-the-art methods.

\emph{Notation.}
We denote the set of eigenvalues of a matrix $A\in \R^{n\times n}$ by $\lambda(A)\subseteq \mathbb C$.
For $x\in\R^n$, denote the infinity norm by $\|x\|_\infty = \max_{i} |x_i|$ and the Euclidean norm by $\|x\| = \sqrt{x^\top x}$.
The set of (bounded) sequences $x:\N\to\R^n$ is denoted by $\ell_{2\mathrm{e}}^n$ ($\ell_{\infty}^n$) and the $\ell_{\infty}$-norm of a sequence $x\in\ell_{\infty}^n$ is denoted by $\|x\|_\infty = \sup_{t\in\N} \|x_t\|_\infty$.
For matrices $A$, $B$, $C$, $D$ with suitable dimensions we denote the operator (system) that maps input signals $u \in \ell_{2\mathrm{e}}^m$ to output signal $y \in \ell_{2\mathrm{e}}^n$ according to $x_{k+1} = Ax_k+Bu_k$, $x_0=0$, $y_k = Cx_k +D u_k$ for all $k\geq0$ by $y=\ssrep{A}{B}{C}{D}u$. 
We denote the set of exponentially stable systems with input dimension $m$ and output dimension $n$ by $\RHinf^{n\times m}:=\left\{\left.\ssrep{A}{B}{C}{D}\right| \max_{\lambda \in \lambda(A)}  |\lambda|<1 \right\}$.
The set of symmetric matrices $A=A^\top \in \R^{n\times n}$ is denoted by $\S^n$.
If $A \in\S^{n}$ is a positive (semi-)definite matrix, we write $A\succ 0$ ($A\succeq 0$).
If $A \in\S^{n}$ is a negative (semi-)definite matrix, we write $A\prec 0$ ($A\preceq 0$).
If $x\in\R^n$ and $A\succ 0$, we write $\|x\|_A = \sqrt{x^\top Ax}$ to indicate that this is a weighted norm on $\R^n$.
For matrices $A\in \R^{n\times m}$ and $P \in \S^{n}$, we denote $\symb{PA}=A^\top P A$ and call $P$ the inner and $A$ the outer factor.
For a vector $x\in\R^n$ we define the diagonal matrix $\diag(x)=\begin{bmatrix*}[l] x_1\\[-2mm]\hspace{10pt} \ddots\\[-1mm]\hspace{20pt}x_n\end{bmatrix*}$.
We denote the convex hull of a set of points $\delta^1,\dots,\delta^m\in\R^n$ by $\mathrm{conv}\{\delta^1,\dots,\delta^m\}$.

\section{Setup}\label{sec:setup}
We consider the linear system $\G$ given by
\begin{subequations}\label{eq:sys}
\begin{alignat}{4}
  \makebox[0pt]{$\G$:\qquad\qquad\ }\x_{\k+1} &= \A\G \x_\k &&+ \B\G\p \p_\k &&+ \B\G\w \w_\k \\
  \q_\k &= \C\G\q \x_\k  &&+ \D\G\q\p \p_\k &&+ \D\G\q\w \w_\k \\
  \z_\k &= \C\G\z \x_\k  &&+ \D\G\z\p \p_\k &&+ \D\G\z\w \w_\k
\end{alignat}
\end{subequations}
where $\k\in\N$ denotes the time index and the system is initialized at $\x_0=0$. 
The channel $\p\to\q$ in $\G$ is called the uncertainty channel and $\w\to\z$ the performance channel.
The dimensions of the signals are $\x\in \ell_{2\mathrm{e}}^\nx$, $\p\in\ell_{2\mathrm{e}}^\np$, $\q \in \ell_{2\mathrm{e}}^\nq$, $\w\in\ell_{2\mathrm{e}}^\nw$, and $\z\in\ell_{2\mathrm{e}}^\nz$ and the system matrices have suitable dimensions. 
The uncertainty channel of system \eqref{eq:sys} is in feedback with a causal unknown operator $\Delta:\ell_{2\mathrm{e}}^\nq\to\ell_{2\mathrm{e}}^\np$ that belongs to a known set $\Delta\in\Deltaset$, i.e., 
\begin{align}\label{eq:Delta}
  \p = \Delta(\q).
\end{align}
We denote this feedback interconnection by $\DG$ and throughout this work, we assume well-posedness, i.e., for all $\w\in\ell_{2\mathrm{e}}$ and $\Delta\in\Deltaset$ there is a unique solution of~\eqref{eq:sys},~\eqref{eq:Delta} that causally depends on $\w$.
This work analyzes the peak-to-peak gain of the performance channel, i.e.,
\begin{align}\label{eq:def_indpeak}
  \|\DG\|_\indpeak = \sup_{\w\in\ell_{\infty}^\nw, \w\neq 0} \frac{\|\z\|_\peak}{\|\w\|_\peak},
\end{align}
where the peak-norm is defined by
\begin{align}
  \|\z\|_\peak = \sup_{\t\in\N }  \|\z_\t\|.
\end{align}
Since $\|\DG\|_\indpeak$ depends on the uncertainty $\Delta$, we are particularly interested in desirably small upper bound $\gamma>0$ on the robust peak-to-peak gain
\begin{align}\label{eq:peak}
  \sup_{\Delta \in \Deltaset}\|\DG \|_\indpeak \leq \gamma.
\end{align}

\begin{rem}
  As discussed by~\cite{Rieber2008}, the peak-to-peak gain is closely related to the $\ell_{\infty}$-to-$\ell_{\infty}$ gain, also known as $\ell_1$-norm. In particular, for every system $H\in\RHinf^{n\times m }$ it holds 
  \begin{align*}
    \frac{1}{\sqrt{n}}\|H \|_\indpeak \leq \|H\|_1 \leq \sqrt{m} \|H \|_\indpeak,
  \end{align*}
  such that every bound $\gamma$ on the peak-to-peak gain implies also a bound on the $\ell_1$-norm.
\end{rem}

We assume that we can describe the uncertainty set $\Deltaset$ by a $\rho$-hard IQCs with a terminal cost.

\begin{defn}[$\rho$-hard IQC with terminal cost]\label{defn:iqc}
  Let $\rho\in (0,1)$, $\M\in\S^{\nfilty}$, $\X\in\S^{\nfilt}$, and $\filt \in \RHinf^{\nfilty\times (\nq+\np)}$ with state space realization
  \begin{subequations}\label{eq:allsys_start}
    \begin{align}
      \makebox[0pt]{$\filt$:\qquad\qquad\ }\filtx_{\k+1} &= \A\filt \filtx_\k + \B\filt\q \q_\k  + \B\filt\p \p_\k\\
      \filty_\k &= \C\filt\filty \filtx_\k + \D\filt\filty\q \q_\k + \D\filt\filty\p \p_\k
    \end{align}
  \end{subequations}
  with $\filtx_0=0\in\R^{\nfilt}$. A causal operator $\Delta:\ell_{2\mathrm{e}}^\nq\to\ell_{2\mathrm{e}}^\np$ is said to satisfy the
  \begin{itemize}
    \item $\rho$-hard IQC defined by $(\rho,\filt,\M,\X)$ if for $\p=\Delta(\q)$ and for all $\q\in\ell_{2\mathrm{e}}^\nq$, $\t\in\N$ it holds
    \begin{align}\label{eq:hardIQC}
      \sum_{\k=0}^{\t} \rho^{\t-\k}\filty_\k^\top \M\filty_\k + \filtx_{\t+1}^\top \X \filtx_{\t+1} \geq 0.
    \end{align}
    \item pointwise IQC defined by $(\filt,\M)$ if for $\p=\Delta(\q)$ and for all $\q\in\ell_{2\mathrm{e}}^\nq$, $\k\in\N$ it holds $\filty_\k^\top \M\filty_\k\geq 0$.
  \end{itemize} 
\end{defn}
We call $\M$ the multiplier, $\filt$ the filter, $\rho$ the squared exponential decay rate\footnote{In literature on $\rho$-hard IQCs, $\rho$ is usually the exponential decay rate and not its square. However, using $\rho$ instead of $\rho^2$ simplifies notation significantly.}, $\X$ the terminal cost matrix, and $\filty$ the filter output. 

\begin{rem}
  Definition~\ref{defn:iqc} reduces to the standard definition of $\rho$-hard IQCs (without a terminal cost) by~\cite{Lessard2016} if $\X=0$.
  As noted in~\cite{Lessard2016}, a pointwise IQC defined by $(\filt,\M)$ implies a $\rho$-hard IQC defined by $(\rho,\filt,\M,0)$ for all $\rho\in (0,1)$. 
\end{rem}

\section{Robust analysis}\label{sec:ana}
In this section, we present our main results: a guaranteed upper bound on the peak-to-peak gain and a guaranteed outer approximation of the reachable set. 
For the analysis, we augment the system $\G$ with the filter $\filt$ to obtain the following augmented system $\all$ with state $\chi=\begin{bmatrix}\filtx \\ \x \end{bmatrix}\in\R^{n_\chi}$, state space representation
\begin{subequations}\label{eq:aug_sys}
  \begin{align}
    \makebox[0pt]{$\all$:\qquad\qquad\ }
    \chi_{\k+1} &= \A\all \chi_\k + \B\all\p \p_\k + \B\all\w \w_\k \\
    \filty_\k &= \C\all\filty \chi_\k + \D\all{\filty}\p \p_\k + \D\all{\filty}\w \w_\k\\
    \z_\k &= \C\all\z \chi_\k  + \D\all\z\p \p_\k + \D\all\z\w \w_\k,
  \end{align}
\end{subequations}
initial condition $\chi_0=0$, and the matrices 
\begin{align*}
  &\left[\begin{array}{c|c:c}\A\all & \B\all\p & \B\all\w \\[0.5mm] \hline\rule{0pt}{3.5mm} \C\all\filty & \D\all{\filty}\p & \D\all{\filty}\w\\[0.5mm] \hdashline\rule{0pt}{3.5mm} \C\all\z & \D\all\z\p & \D\all\z\w \end{array}\right]= \left[\begin{array}{cc|c:c} 
    \A\filt & \B\filt\q\C\G\q & \B\filt\p+ \B\filt\q\D\G\q\p &\B\filt\q\D\G\q\w\\ 
    0 & \A\G & \B\G\p & \B\G\w \\[0.03cm] \hline\rule{0pt}{2ex}
    \C\filt\filty & \D\filt\filty\q\C\G\q & \D\filt\filty\p+ \D\filt\filty\q\D\G\q\p & \D\filt\filty\q\D\G\q\w \\[0.03cm] \hdashline\rule{0pt}{2ex}
    0 & \C\G\z & \D\G\z\p & \D\G\z\w\end{array}\right].
\end{align*}
This reformulation allows us to state the following result.
\begin{thm}\label{thm:ana}
  Assume that all $\Delta\in\Deltaset$ satisfy the $\rho$-hard IQC defined by $(\rho,\filt,\M,\X)$. Further, assume that there exist $\P\in\S^{n_\chi}$, $\gamma\geq \mu$, and $\mu\geq 0$ such that
  \begin{align}\label{eq:ana_LMI}
    \symb {\diagmat{-\rho \P\\&\P\\&&\M\\&&&- \mu I}
      \begin{bmatrix}
        I &  0 & 0\\
        \A\all & \B\all\p & \B\all\w \\[0.5mm]
        \C\all\filty & \D\all{\filty}\p & \D\all{\filty}\w \\
         0& 0&I
       \end{bmatrix}}\preceq 0,
  \end{align}
and
 \begin{align}\label{eq:ana_LMI2}
   \newcommand{\matspace}{\hspace{0.4cm}}
    \symb{{\arraycolsep=1pt\begin{bmatrix}-\rho\P\\
        &\tilde \X \\
        &&\M\\
        &&&\frac {\rho}{\gamma(1-\rho)} I\\
        &&&&-\frac{\rho(\gamma-\mu) }{1-\rho}I\end{bmatrix}}
    \begin{bmatrix}I&0&0\\[0.3mm]
      \A\all & \B\all\p & \B\all\w \\[0.8mm]
      \C\all\filty&\D\all\filty\p&\D\all\filty\w\\[0.8mm]
      \C\all\z & \D\all\z\p & \D\all\z\w \\[0.5mm]
      0&0&I\end{bmatrix}}  \preceq 0,
\end{align}
  hold  with $\tilde \X = \diagmat{\X \\ & 0}$. Then $\|\DG\|_\indpeak \leq \gamma $ for all $\Delta \in \Deltaset$.
\end{thm}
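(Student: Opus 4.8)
The plan is to run a dissipativity argument on the augmented system $\all$ with the quadratic storage function $V_\k = \chi_\k^\top \P \chi_\k$, exploiting that the filter output of $\all$ coincides with the IQC filter output $\filty_\k$ along the interconnection $\p=\Delta(\q)$. First I would left- and right-multiply~\eqref{eq:ana_LMI} by $[\chi_\k^\top\ \p_\k^\top\ \w_\k^\top]^\top$; since the outer factor maps this vector to $[\chi_\k^\top\ \chi_{\k+1}^\top\ \filty_\k^\top\ \w_\k^\top]^\top$ along trajectories of $\all$, this gives the pointwise inequality $V_{\k+1} \leq \rho V_\k - \filty_\k^\top \M \filty_\k + \mu \|\w_\k\|^2$. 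Weighting by $\rho^{\t-\k}$, summing over $\k=0,\dots,\t$, and using $V_0=0$ (as $\chi_0=0$) telescopes the left-hand side to $V_{\t+1}$ and yields
\begin{align*}
  V_{\t+1} \leq -\sum_{\k=0}^{\t}\rho^{\t-\k}\filty_\k^\top\M\filty_\k + \mu\sum_{\k=0}^{\t}\rho^{\t-\k}\|\w_\k\|^2.
\end{align*}

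Second, I would multiply out~\eqref{eq:ana_LMI2} in the same way, now obtaining the additional row $\z_\k$ and the inner factor $\tilde\X$ on $\chi_{\k+1}$, so that $\chi_{\k+1}^\top\tilde\X\chi_{\k+1}=\filtx_{\k+1}^\top\X\filtx_{\k+1}$. Evaluating this inequality only at the single instant $\k=\t$ bounds $\tfrac{\rho}{\gamma(1-\rho)}\|\z_\t\|^2$ in terms of $\rho V_\t$, $-\filtx_{\t+1}^\top\X\filtx_{\t+1}$, $-\filty_\t^\top\M\filty_\t$, and $\tfrac{\rho(\gamma-\mu)}{1-\rho}\|\w_\t\|^2$. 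The decisive step is to upper-bound the indefinite term $\rho V_\t$ using the telescoped inequality from the first part at horizon $\t-1$ scaled by $\rho$; because $\rho^{\t-\t}=1$, the isolated $-\filty_\t^\top\M\filty_\t$ then merges with the resulting sum up to $\t-1$ into the complete weighted sum $-\sum_{\k=0}^{\t}\rho^{\t-\k}\filty_\k^\top\M\filty_\k$.

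Together with $-\filtx_{\t+1}^\top\X\filtx_{\t+1}$, this is exactly the negative of the left-hand side of the $\rho$-hard IQC~\eqref{eq:hardIQC}, so the IQC discards all filter-output and terminal-cost terms simultaneously and leaves only the disturbance contribution
\begin{align*}
  \tfrac{\rho}{\gamma(1-\rho)}\|\z_\t\|^2 \leq \mu\sum_{\k=0}^{\t-1}\rho^{\t-\k}\|\w_\k\|^2 + \tfrac{\rho(\gamma-\mu)}{1-\rho}\|\w_\t\|^2.
\end{align*}
Bounding $\|\w_\k\|\leq\|\w\|_\peak$ and using $\sum_{\k=0}^{\t-1}\rho^{\t-\k}\leq \rho/(1-\rho)$, the coefficients $\mu$ and $\gamma-\mu$ add up to $\gamma$ and the right-hand side collapses to $\tfrac{\rho\gamma}{1-\rho}\|\w\|_\peak^2$; rearranging gives $\|\z_\t\|\leq\gamma\|\w\|_\peak$ for all $\t$ (the case $\t=0$ is immediate from the $\k=0$ inequality and $V_0=0$), hence $\|\z\|_\peak\leq\gamma\|\w\|_\peak$ and the claim. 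The hypotheses $\mu\geq 0$ and $\gamma\geq\mu$ are used precisely to keep these last estimates pointing in the right direction. I expect the main obstacle to be the index bookkeeping that makes the filter-output terms from the two inequalities telescope into a single IQC sum; the asymmetric placement of $\P$ in~\eqref{eq:ana_LMI} versus $\tilde\X$ in~\eqref{eq:ana_LMI2} is exactly what enables this cancellation, and verifying that the prescribed coefficients sum to $\gamma$ is what pins down the bound.
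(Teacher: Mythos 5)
Your proposal is correct and follows essentially the same route as the paper's proof: multiplying out the two matrix inequalities along trajectories, forming $\delta_2(\t)+\sum_{\k=0}^{\t-1}\rho^{\t-\k}\delta_1(\k)\leq 0$ (your ``telescoped inequality at horizon $\t-1$ scaled by $\rho$'' is exactly this combination), invoking the $\rho$-hard IQC with terminal cost to discard the filter-output and terminal terms, and finishing with the geometric-sum and peak-norm bounds. The index bookkeeping and the roles of $\mu\geq 0$ and $\gamma\geq\mu$ are all as in the paper.
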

\begin{proof}
  Multiplying both matrix inequalities~\eqref{eq:ana_LMI} and~\eqref{eq:ana_LMI2} from the left by $\begin{bmatrix}\chi_\k^\top & \p_\k^\top & \w_\k^\top\end{bmatrix}$ and from the right by its transpose, and using~\eqref{eq:aug_sys}, we obtain
  \begin{align}\label{eq:ana_diss1}
    \delta_1(k):=\chi_{\k+1}^\top \P\chi_{\k+1}-\rho\chi_\k^\top \P\chi_\k+\filty_\k^\top\M\filty_\k - \mu \|\w_\k\|^2 \leq 0
  \end{align}
  and
  \begin{align}\nonumber 
    \delta_2(k)&:=-\rho\chi_\k^\top \P\chi_\k+\chi_{\k+1}^\top \tilde\X \chi_{\k+1}+\filty_\k^\top\M\filty_\k\\&\qquad  +\frac {\rho}{\gamma(1-\rho)}\|\z_\k\|^2  -\frac{\rho(\gamma-\mu) }{1-\rho} \|\w_\k\|^2\leq 0, \label{eq:ana_diss2}
  \end{align}
  respectively. Combining both inequalities as follows
  \begin{align*}
  \delta_2(\t) +\sum_{\k=0}^{\t-1} \rho^{\t-\k}\delta_1(\k) \leq 0
  \end{align*}
  and using the telescoping sum argument
  \begin{flalign}\label{eq:tel_sum}
    \sum_{\k=0}^{\t-1}\! \rho^{\t-\k}\! \big(\!\chi_{\k+1}^\top \P\chi_{\k+1}\!-\!\rho\chi_\k^\top\! \P\chi_\k\!\big)\! = \rho \chi_\t^\top\! \P\chi_\t -\rho^{\t+1}\! \underbrace{\chi_0^\top \P\chi_0}_{=0}\hspace{-1cm}\ &&
  \end{flalign}
  leads to 
  \begin{align*}\nonumber
    0&\geq \frac{\rho}{\gamma(1-\rho)} \|\z_\t\|^2-\frac{\rho(\gamma-\mu)}{1-\rho} \|\w_\t\|^2- \mu \sum_{\k=0}^{\t-1}\rho^{\t-\k} \|\w_\k\|^2 \\ 
    &\quad \ +\sum_{\k=0}^\t \rho^{\t-\k}\filty_\k^\top \M\filty_\k+ \chi_{\t+1}^\top \tilde\X \chi_{\t+1}. 
  \end{align*} 
  Note that $\chi_{\t+1}^\top \tilde\X \chi_{\t+1}=\filtx_{\t+1}^\top \X \filtx_{\t+1}$ such that we can use the IQC~\eqref{eq:hardIQC} to obtain
  \begin{align*}\nonumber
    0&\geq \frac{\rho}{\gamma(1-\rho)} \|\z_\t\|^2-\frac{\rho(\gamma-\mu)}{1-\rho} \|\w_\t\|^2- \mu \sum_{\k=0}^{\t-1}\rho^{\t-\k} \|\w_\k\|^2. 
  \end{align*}  
  Further, using the geometric sum upper estimate 
  \begin{align}\label{eq:wpeak}
    \sum_{\k=0}^{\t-1}\rho^{\t-\k} \|\w_\k\|^2 \leq  \rho \|\w\|^2_\peak\sum_{\k=0}^{\t-1}\rho^{\k} \leq \frac{\rho}{1-\rho} \|\w\|^2_\peak
  \end{align}
  it follows 
  \begin{align*}
      0\geq \frac{\rho}{\gamma(1-\rho)} \|\z_\t\|^2 -\frac{\rho(\gamma-\mu)}{1-\rho} \|\w_\t\|^2- \frac{\mu \rho}{1-\rho} \|\w\|^2_\peak.
  \end{align*} 
  Finally, using this inequality divided by $\frac{\rho}{1-\rho}>0$, we obtain
  \begin{align*}
    0 &\geq \frac 1 \gamma\|\z_\t\|^2 -(\gamma-\mu) \|\w_\t\|^2- \mu \|\w\|^2_\peak\\
    &\geq \frac 1 \gamma\|\z_\t\|^2 -(\gamma-\mu) \|\w\|^2_\peak-\mu\|\w\|^2_\peak\\
    &\geq  \frac 1 \gamma\|\z_\t\|^2 -\gamma \|\w\|^2_\peak .
  \end{align*} 
  As the above reasoning holds for all $\t\in\N$ and $\Delta \in\Deltaset$, we deduce $\|\z\|^2_\peak \leq  \gamma^2\|\w\|^2_\peak $, i.e., $\sup_{\Delta\in\Deltaset}\|\DG\|_\indpeak \leq \gamma$.
\end{proof}

\begin{rem}\label{rem:sdp}
  For fixed $\rho$, the matrix inequalities~\eqref{eq:ana_LMI} and~\eqref{eq:ana_LMI2} are linear in the decision variables $(\P, \M, \X, \gamma, \mu)$ and can be solved with an LMI solver.
  Suppose that we have an LMI defining the set $\MXset(\rho)\subseteq \S^{\nfilty}\times \S^{\nfilt}$ such that for all $(\M,\X)\in\MXset(\rho)$, any $\Delta \in \Deltaset$ satisfies the $\rho$-hard IQC defined by $(\rho,\filt,\M,\X)$ (cf. Section~\ref{sec:param} for examples of such LMIs $\MXset(\rho)$).
  Then, we can minimize the upper bound $\gamma^\star (\rho)$ from Theorem~\ref{thm:ana} with fixed $\rho$ by solving the semi-definite program
  \begin{align}\label{eq:sdp}
    \begin{split}
      \gamma^\star(\rho) = \min\ &\gamma \\
      \mathrm{s.t.}\ & ( \M, \X)\in\MXset(\rho), \gamma \geq \mu\geq 0,~\eqref{eq:ana_LMI},~\eqref{eq:ana_LMI2}.
    \end{split}
  \end{align}
  Performing a line search over $\rho\in(0,1)$ to minimize $\gamma^\star(\rho)$ yields the best upper bound on the robust peak-to-peak gain we can get using Theorem~\ref{thm:ana}.
\end{rem}

If the IQC holds pointwise, we can use a multiplier $\M_2\neq \M$ in~\eqref{eq:ana_LMI2} that is different from the multiplier $\M$ in~\eqref{eq:ana_LMI}, as the following theorem shows.
This additional degree of freedom can reduce the conservatism as Example~\ref{exmp:rieber} demonstrates.

\begin{thm}\label{thm:ana_pointwise}
  Assume that there exists a set $\Mset\subseteq \mathbb{S}^{\nfilty}$ such that all $\Delta\in\Deltaset$ satisfy the pointwise IQC defined by $(\filt,\M)$ for all $\M\in\Mset$. Further, assume that there exists $\P\in\S^{n_\chi}$, $\rho\in(0,1)$, $\gamma\geq \mu$, $\mu\geq 0$, $\M\in\Mset$, and $\M_2\in\Mset$ such that~\eqref{eq:ana_LMI} and
  \begin{align}\label{eq:ana_LMI2_pointwise}
    \newcommand{\matspace}{\hspace{0.4cm}}
    \symb{{\arraycolsep=1pt\begin{bmatrix}-\rho\P\\
          &\M_2\\
          &&\frac {\rho}{\gamma(1-\rho)} I\\
          &&&-\frac{\rho(\gamma-\mu) }{1-\rho}I\end{bmatrix}}
      \begin{bmatrix}I&0&0\\[0.3mm]
        \C\all\filty&\D\all\filty\p&\D\all\filty\w\\[0.8mm]
        \C\all\z & \D\all\z\p & \D\all\z\w \\[0.5mm]
        0&0&I\end{bmatrix}}  \preceq 0,
  \end{align}
  hold. Then $\|\DG\|_\indpeak \leq \gamma $ for all $\Delta \in \Deltaset$.
\end{thm}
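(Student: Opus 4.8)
The plan is to mirror the dissipativity argument in the proof of Theorem~\ref{thm:ana} almost line for line, replacing the single $\rho$-hard-IQC-with-terminal-cost step by two independent applications of the pointwise IQC. First I would left- and right-multiply \eqref{eq:ana_LMI} by $\begin{bmatrix}\chi_\k^\top & \p_\k^\top & \w_\k^\top\end{bmatrix}$ and its transpose, reproducing $\delta_1(\k)\leq 0$ exactly as in \eqref{eq:ana_diss1}. Applying the same multiplication to the new inequality \eqref{eq:ana_LMI2_pointwise} gives
\begin{align*}
  \delta_2'(\k)&:=-\rho\chi_\k^\top\P\chi_\k+\filty_\k^\top\M_2\filty_\k\\
  &\quad+\tfrac{\rho}{\gamma(1-\rho)}\|\z_\k\|^2-\tfrac{\rho(\gamma-\mu)}{1-\rho}\|\w_\k\|^2\leq 0,
\end{align*}
which differs from \eqref{eq:ana_diss2} in exactly two ways: the terminal-cost term $\chi_{\k+1}^\top\tilde\X\chi_{\k+1}$ is gone (its block row has been deleted from the outer factor), and $\M_2$ appears in place of $\M$.

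Next I would form the analogous combination $\delta_2'(\t)+\sum_{\k=0}^{\t-1}\rho^{\t-\k}\delta_1(\k)\leq 0$ and apply the telescoping identity \eqref{eq:tel_sum}. Since $\delta_2'$ retains the term $-\rho\chi_\t^\top\P\chi_\t$, the storage terms cancel precisely as before, leaving
\begin{align*}
  &\filty_\t^\top\M_2\filty_\t+\sum_{\k=0}^{\t-1}\rho^{\t-\k}\filty_\k^\top\M\filty_\k+\tfrac{\rho}{\gamma(1-\rho)}\|\z_\t\|^2\\
  &\quad-\tfrac{\rho(\gamma-\mu)}{1-\rho}\|\w_\t\|^2-\mu\sum_{\k=0}^{\t-1}\rho^{\t-\k}\|\w_\k\|^2\leq 0.
\end{align*}
This is where the one genuinely different step occurs: instead of collecting the filter terms into a single $\rho$-hard IQC and absorbing them through the terminal cost, I would invoke the pointwise IQC twice. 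Because $\M,\M_2\in\Mset$, the pointwise property yields $\filty_\t^\top\M_2\filty_\t\geq 0$ and $\filty_\k^\top\M\filty_\k\geq 0$ for every $\k$, so this entire nonnegative filter contribution can be discarded from the left-hand side while keeping the inequality valid.

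From here the remainder is verbatim the tail of the proof of Theorem~\ref{thm:ana}: the geometric-sum bound \eqref{eq:wpeak} estimates $\sum_{\k=0}^{\t-1}\rho^{\t-\k}\|\w_\k\|^2$ by $\tfrac{\rho}{1-\rho}\|\w\|_\peak^2$, dividing through by $\tfrac{\rho}{1-\rho}>0$ and using $\|\w_\t\|^2\leq\|\w\|_\peak^2$ gives $\tfrac1\gamma\|\z_\t\|^2\leq\gamma\|\w\|_\peak^2$, and taking the supremum over $\t\in\N$ and $\Delta\in\Deltaset$ produces $\|\DG\|_\indpeak\leq\gamma$. There is no real analytical obstacle; the algebra is routine bookkeeping. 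The only point that needs care—and the conceptual reason a second multiplier is admissible here while it is not in Theorem~\ref{thm:ana}—is that pointwise nonnegativity decouples the time instants: each summand $\filty_\k^\top\M\filty_\k$ and the single term $\filty_\t^\top\M_2\filty_\t$ can be dropped independently, whereas the terminal-cost formulation only certifies the \emph{weighted sum} $\sum_{\k=0}^\t\rho^{\t-\k}\filty_\k^\top\M\filty_\k+\filtx_{\t+1}^\top\X\filtx_{\t+1}$ under one common $\M$. Consequently no terminal cost is required and the extra freedom $\M_2\neq\M$ is obtained at no cost.
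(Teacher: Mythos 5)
Your proposal is correct and follows essentially the same route as the paper, which simply notes that \eqref{eq:ana_LMI2_pointwise} is \eqref{eq:ana_LMI2} with $\tilde\X=0$ and $\M$ replaced by $\M_2$, and that the pointwise IQC lets one drop the filter terms directly (the paper drops them at the level of $\delta_1,\delta_2$ rather than after the telescoping combination, an immaterial reordering). Your closing observation about why pointwise nonnegativity decouples the time instants and makes $\M_2\neq\M$ admissible matches the paper's intent exactly.
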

\begin{proof}
  The proof is analogue to the proof of Theorem~\ref{thm:ana}, since~\eqref{eq:ana_LMI2_pointwise} is equivalent to~\eqref{eq:ana_LMI2} with $\tilde X=0$ and $\M$ replaced by $\M_2$. Then, the only difference is that we directly drop $\filty_k^\top\M\filty_\k$ in~\eqref{eq:ana_diss1} and $\filty_k^\top\M_2\filty_\k$ in~\eqref{eq:ana_diss2} due to the pointwise IQC. 
  The remainder of the proof stays unchanged and yields $\|\DG\|_\indpeak\leq \gamma$.
\end{proof}

Note that both Theorems~\ref{thm:ana} and~\ref{thm:ana_pointwise} do not require positive definiteness of $\P-\tilde \X$ or $\P$.
Hence, we obtain an outer approximation of the reachable set of the output $\z$ but not of the state $\x$ or the extended state $\chi$.
Still these results can be used to obtain an ellipsoidal outer approximation of the reachable set.
In particular, for a given shape matrix $Q\in \S^\nx$, $Q\succ 0$ we can consider the performance output $\z=\C\G\z \x$, where $\C\G\z$ is defined by the Cholesky decomposition $(\C\G\z)^\top \C\G\z = Q$, then the Theorems~\ref{thm:ana} and~\ref{thm:ana_pointwise} yield the following ellipsoidal outer approximation of the reachable set: $\|\x_\k\|_Q \leq \gamma \|\w\|_\peak $ for all $\k\in\N$. 
However, often one does not want to fix the shape matrix $Q$ beforehand, but rather use it as decision variable and fix $\gamma$ and $\mu$ instead.
For such cases, we provide the following theorem.
\begin{thm}\label{thm:rpi}
  Assume that all $\Delta\in\Deltaset$ satisfy the $\rho$-hard IQC defined by $(\rho,\filt,\M,\X)$. Further, let $\mu=1-\rho$ and assume that there exist $\P\in\S^{n_\chi}$ and $Q\in\S^\nx$ such that~\eqref{eq:ana_LMI}, $Q\succ 0$, and $\P \succ \diagmat{\X\\&Q}$ hold.
  Then $\|\x_\k\|_Q^2 \leq\|\chi_\k \|_{\P-\tilde \X}^2 \leq\|\w\|_\peak^2$ holds for all $\k\in\N$ and all $\Delta \in \Deltaset$.
\end{thm}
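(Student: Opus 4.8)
The plan is to adapt the dissipativity argument from the proof of Theorem~\ref{thm:ana}, but now only \eqref{eq:ana_LMI} is available (there is no second LMI to supply a $\delta_2$-type inequality), so the terminal cost must be absorbed through the IQC applied at the correct time index. I would first dispatch the outer inequality $\|\x_\k\|_Q^2 \leq \|\chi_\k\|_{\P-\tilde\X}^2$: the hypothesis $\P \succ \diagmat{\X\\&Q}$ together with $\tilde\X = \diagmat{\X\\&0}$ gives $\P-\tilde\X - \diagmat{0\\&Q} = \P - \diagmat{\X\\&Q} \succ 0$, hence $\P-\tilde\X \succ \diagmat{0\\&Q} \succeq 0$. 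Consequently $\chi^\top(\P-\tilde\X)\chi \geq \chi^\top\diagmat{0\\&Q}\chi = \x^\top Q\x$ for every $\chi = \begin{bmatrix}\filtx\\\x\end{bmatrix}$, which is exactly the first inequality (and, being a sum of a positive definite and a positive semidefinite matrix, $\P-\tilde\X\succ0$, so $\|\cdot\|_{\P-\tilde\X}$ is a genuine weighted norm).

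For the inner inequality $\|\chi_\k\|_{\P-\tilde\X}^2 \leq \|\w\|_\peak^2$, I would reuse the dissipation inequality $\delta_1(\k) \leq 0$ from \eqref{eq:ana_diss1}, which follows from \eqref{eq:ana_LMI} exactly as in the proof of Theorem~\ref{thm:ana}. Weighting by $\rho^{\t-\k}$ and summing over $\k=0,\dots,\t-1$, the telescoping identity \eqref{eq:tel_sum} collapses the $\P$-terms to $\rho\,\chi_\t^\top\P\chi_\t$, leaving
\[
\rho\,\chi_\t^\top\P\chi_\t + \sum_{\k=0}^{\t-1}\rho^{\t-\k}\filty_\k^\top\M\filty_\k - \mu\sum_{\k=0}^{\t-1}\rho^{\t-\k}\|\w_\k\|^2 \leq 0.
\]
The key step is to control the middle filter-output sum. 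Applying the IQC \eqref{eq:hardIQC} at time $\t-1$, multiplying by $\rho>0$, and using $\chi_\t^\top\tilde\X\chi_\t = \filtx_\t^\top\X\filtx_\t$ yields the lower bound $\sum_{\k=0}^{\t-1}\rho^{\t-\k}\filty_\k^\top\M\filty_\k \geq -\rho\,\chi_\t^\top\tilde\X\chi_\t$. Since this bounds the middle sum from below, substituting it into the display above enlarges the left-hand side and gives $\rho\,\chi_\t^\top(\P-\tilde\X)\chi_\t \leq \mu\sum_{\k=0}^{\t-1}\rho^{\t-\k}\|\w_\k\|^2$.

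Finally I would invoke the geometric-sum estimate \eqref{eq:wpeak}, $\sum_{\k=0}^{\t-1}\rho^{\t-\k}\|\w_\k\|^2 \leq \frac{\rho}{1-\rho}\|\w\|_\peak^2$, divide by $\rho$, and insert $\mu = 1-\rho$ to obtain $\chi_\t^\top(\P-\tilde\X)\chi_\t \leq \|\w\|_\peak^2$ for every $\t\in\N$ and every $\Delta\in\Deltaset$; chaining this with the outer inequality of the first paragraph closes the proof. I expect the main obstacle to be the time-index bookkeeping when invoking the IQC: the terminal cost $\X$ must be matched to $\chi_\t$ rather than $\chi_{\t+1}$, so the IQC has to be applied one step earlier, at $\t-1$, and rescaled by $\rho$ so that its filter-output weights align with the $\rho^{\t-\k}$ weighting produced by the telescoping sum. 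Getting the direction of this lower bound right is the only genuinely delicate point, and the case $\t=0$ is trivial since $\chi_0=0$.
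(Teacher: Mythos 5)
Your proposal is correct and follows essentially the same route as the paper: derive $\delta_1(\k)\leq 0$ from \eqref{eq:ana_LMI}, telescope the exponentially weighted sum, absorb the filter-output sum via the $\rho$-hard IQC evaluated at horizon $\t-1$ (which supplies exactly the terminal term $\filtx_\t^\top\X\filtx_\t$), and finish with the geometric-sum bound and $\mu=1-\rho$. The only cosmetic difference is that you weight by $\rho^{\t-\k}$ where the paper uses $\rho^{\t-\k-1}$, which merely introduces an overall factor of $\rho$ that you correctly divide out at the end.
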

\begin{proof}
  Due to $\P \succ \diagmat{\X\\&Q}$ it follows $\|\x_\k\|_Q^2 \leq\|\chi_\k \|_{\P-\tilde \X}^2$.
  Since~\eqref{eq:ana_LMI} implies~\eqref{eq:ana_diss1}, we compute $\sum_{k=0}^{\t-1}\rho^{t-k-1}\delta_1(k)\leq 0$ and we use a telescoping sum argument similar to~\eqref{eq:tel_sum} to obtain
  \begin{align*}
    \chi_\t^\top \P \chi_\t
    +\sum_{k=0}^{\t-1}\rho^{t-k-1}\filty_\k^\top \M\filty_\k - \mu \sum_{\k=0}^{\t-1}\rho^{\t-\k-1} \|\w_\k\|^2\leq 0.
  \end{align*}
  Using the $\rho$-hard IQC~\eqref{eq:hardIQC} as well as~\eqref{eq:wpeak} with $\mu=1-\rho$ yields
  \begin{align*}
    \chi_\t^\top \P \chi_\t 
    - \filtx_\t^\top \X \filtx_\t -  \|\w\|^2_\peak \leq 0.
  \end{align*}
  Hence, $\|\chi_\t \|_{\P-\tilde \X}^2= \chi_\t^\top \P \chi_\t - \filtx_\t^\top \X \filtx_\t \leq \|\w\|_\peak^2$ holds for all $\t\in\N$.
\end{proof}
 In the special case of $\rho$-hard IQCs with $\X=0$, Theorem~\ref{thm:rpi} is a corollary of~\cite[Theorem~2]{Schwenkel2020}.

\section{$\rho$-hard IQC for parametric uncertainties}\label{sec:param}
In this section, we provide $\rho$-hard IQCs with terminal conditions for the uncertainty class of parametric uncertainties.
Thereby, we demonstrate how we can construct rich sets $\MXset(\rho)$ using LMIs such that we can implement the procedure from Remark~\ref{rem:sdp}.
Furthermore, we show that our approach unifies existing LMI approaches to determine the robust peak-to-peak gain.
In particular, we consider the following the sets of parametric uncertainties $\Deltaset_{\mathrm{p}} = \{\ell_{2\mathrm{e}}^\nq\to\ell_{2\mathrm{e}}^\np\mid \Delta(\q)_\k = \Delta_k\q_k\}$:
\begin{itemize}
  \item time-varying with coefficients from a polytope
\end{itemize}
  \begin{flalign}
    \Deltaset_{\mathrm{p,tv}}:=\{\Delta\in\Deltaset_{\mathrm{p}} \mid \Delta_\k = \diag(\delta_k), \delta_k \in \mathrm{conv} \{\delta^{1}, \dots, \delta^{m}\} \}\hspace{-1cm}&& \label{eq:Delta_ptv}
  \end{flalign} 
\begin{itemize}
  \item[] with $\delta^{1}, \dots, \delta^{m}\in \R^{\nq}$ as in~\cite{Rieber2008},
  \item time-invariant with coefficients from a polytope
\end{itemize}
  \begin{align}\label{eq:Delta_pti}
    \Deltaset_{\mathrm{p,ti}}:=\{\Delta\in\Deltaset_{\mathrm{p}} \mid \Delta_\k = \diag(\delta), \delta \in \mathrm{conv} \{\delta^{1}, \dots, \delta^{m}\} \}
  \end{align} 
\begin{itemize}
  \item[] with $\delta^{1}, \dots, \delta^{m}\in \R^{\nq}$ as in~\cite{Rieber2008},
  \item time-varying with gain bound as in \cite{Ji2007}:
\end{itemize}
  \begin{align}
    \Deltaset_{\mathrm{p,tv,g}}:=\{\Delta\in\Deltaset_{\mathrm{p}} \mid \Delta_k^\top \Delta_k \leq I  \}.\label{eq:Delta_ptvg}
  \end{align}
These sets of uncertainties have been studied within the classical IQC framework leading to well-known IQCs (see, e.g., \cite{Megretski1997} or \cite{Veenman2016}).
The class of IQCs~\cite[Class~4]{Veenman2016} for $\Deltaset_{\mathrm{p,tv}}$ is known to hold pointwise and thus we can directly use it without modifications.

\begin{thm}\!(Pointwise IQC for $\Deltaset_{\mathrm{p,tv}}$). \ \ \label{thm:param_iqc_tv}
  Define $\filt=I_{2\nq}$ and $\Mset$ as the set of all $\M\in\S^{2\nq}$ that satisfy
  \begin{align*}
    \symb{\M \begin{bmatrix}0\\ I_{\nq}\end{bmatrix}}\preceq 0 \ \text{and}\ \symb{\M \begin{bmatrix} I_\nq \\ \diag(\delta^j)\end{bmatrix}} \succeq 0 \ \forall j=1,\dots,m.
  \end{align*}
  Then any $\Delta\in\Deltaset_{\mathrm{p,tv}}$ satisfies the pointwise IQC defined by $(\filt, \M)$ for all $\M\in\Mset$.
\end{thm}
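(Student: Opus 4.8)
The plan is to exploit that the filter $\filt=I_{2\nq}$ is a static pass-through, so that the pointwise IQC reduces to a purely algebraic condition on the outer factor induced by $\Delta$, and then to lift the vertex conditions to the whole polytope by a matrix-concavity argument that is driven entirely by the first hypothesis on $\M$.

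First I would note that with $\filt=I_{2\nq}$ the filter output is $\filty_\k=\begin{bmatrix}\q_\k\\\p_\k\end{bmatrix}$, and that any $\Delta\in\Deltaset_{\mathrm{p,tv}}$ gives $\p_\k=\diag(\delta_\k)\q_\k$ with $\delta_\k=\sum_{j=1}^m\lambda_j\delta^j$ for some weights $\lambda_j\ge 0$, $\sum_j\lambda_j=1$ (in general depending on $\k$). Hence $\filty_\k=\begin{bmatrix}I_\nq\\\diag(\delta_\k)\end{bmatrix}\q_\k$ and
\begin{align*}
  \filty_\k^\top\M\filty_\k=\q_\k^\top N(\delta_\k)\q_\k,\qquad N(\delta):=\symb{\M\begin{bmatrix}I_\nq\\\diag(\delta)\end{bmatrix}}.
\end{align*}
It thus suffices to show $N(\delta_\k)\succeq 0$, since the second hypothesis provides $N(\delta^j)\succeq 0$ at every vertex $\delta^j$.

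The key step is to prove that $N$ is matrix-concave in $\delta$, which is exactly where the first hypothesis is used. Since $\delta\mapsto\begin{bmatrix}I_\nq\\\diag(\delta)\end{bmatrix}$ is affine, $N(\delta)$ is a matrix-valued quadratic in $\delta$ whose constant and cross terms are affine (and hence respect convex combinations with equality), while its purely quadratic part equals $\diag(\delta)\M_{22}\diag(\delta)$ with $\M_{22}:=\begin{bmatrix}0&I_\nq\end{bmatrix}\M\begin{bmatrix}0\\I_\nq\end{bmatrix}$ the lower-right block of $\M$. Writing $\bar D:=\diag(\delta_\k)=\sum_j\lambda_j\diag(\delta^j)$ and $D_j:=\diag(\delta^j)$, the elementary identity
\begin{align*}
  \sum_j\lambda_j D_j\M_{22}D_j-\bar D\M_{22}\bar D=\sum_j\lambda_j\,(D_j-\bar D)\M_{22}(D_j-\bar D)
\end{align*}
combined with the first hypothesis $\M_{22}\preceq 0$ (so that every congruence $(D_j-\bar D)\M_{22}(D_j-\bar D)\preceq 0$) yields $N(\delta_\k)\succeq\sum_j\lambda_j N(\delta^j)$.

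Putting the pieces together, $N(\delta_\k)\succeq\sum_j\lambda_j N(\delta^j)\succeq 0$, whence $\filty_\k^\top\M\filty_\k=\q_\k^\top N(\delta_\k)\q_\k\ge 0$ for all $\k\in\N$ and all $\q$, i.e.\ the pointwise IQC holds. The only substantive obstacle is the concavity step, and it hinges on the \emph{diagonal} structure of $\Delta_\k$: this is precisely what turns the quadratic term into a congruence transform of $\M_{22}$, which is negative semidefinite by the first hypothesis. Without diagonality this reduction would break down; the remaining manipulations are routine bookkeeping.
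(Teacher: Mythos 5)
Your proof is correct and is exactly the argument the paper delegates to its citation of Veenman et al.\ (Class~4), transferred to discrete time: reduce the pointwise IQC to $N(\delta_\k)\succeq 0$, obtain matrix concavity of $N$ in $\delta$ from $\M_{22}\preceq 0$ via the congruence identity, and conclude from the vertex conditions. The only quibble is your closing remark that diagonality is essential to the concavity step --- the identity $\sum_j\lambda_j D_j\M_{22}D_j-\bar D\M_{22}\bar D=\sum_j\lambda_j(D_j-\bar D)\M_{22}(D_j-\bar D)$ only requires $\Delta_\k$ to depend affinely on the parameter vector, which $\diag(\cdot)$ happens to supply --- but this is a side comment and does not affect correctness.
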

\begin{proof}
  The proof in~\cite[Class~4]{Veenman2016} can be analogously transferred to discrete-time.
\end{proof}
The class of IQCs~\cite[Class~6]{Veenman2016} for $\Deltaset_{\mathrm{p,ti}}$ on the other hand is a class of soft IQCs and hence, we cannot use it directly but need to restrict it to $\rho$-hard IQCs with terminal cost to apply Theorem~\ref{thm:ana}.
\begin{thm}\!($\rho$-hard IQC with terminal cost for $\Deltaset_{\mathrm{p,ti}}$). \label{thm:param_iqc}\ \ 
  Let $\Phi=\ssrep{\A\Phi}{\B\Phi{}}{\C\Phi{}}{\D\Phi{}{}}\in\RHinf^{n_\sigma \times 1}$ have state dimension $n_\phi$, i.e., $\A\Phi\in\R^{n_\phi \times n_\phi}$. 
  Define $\filt$ as the Kronecker product $\filt = I_{2\nq}\otimes \Phi$, i.e., 
  \begin{align*}
    \ssrepflex{c}{c:c}{\A\filt & \B\filt\q & \B\filt\p}{\C\filt\filty & \D\filt\filty\q & \D\filt\filty\p} = \ssrepflex{cc}{c:c}{I_{\nq}\otimes\A\Phi & 0 & I_{\nq}\otimes \B\Phi{} & 0 \\ 0 & I_{\nq}\otimes\A\Phi & 0 & I_{\nq}\otimes \B\Phi{}}{I_{\nq}\otimes\C\Phi{} & 0 & I_{\nq}\otimes \D\Phi{}{} & 0 \\ 0 & I_{\nq}\otimes\C\Phi & 0 & I_{\nq}\otimes \D\Phi{}{}}
  \end{align*}
  and for $\rho\in(0,1)$ define $\MXset(\rho)$ as the set of all $(\M,\X)\in \S^{2n_\sigma\nq} \times \S^{2n_\phi\nq}$ for which there exist $Y^j\in \S^{2n_\phi\nq}$ such that the following matrix inequalities hold for all $j=1,\dots,m$:
  \begin{align}\label{eq:paramIQCLMI1}
    \symb{\diagmat{-\rho \X_{22}\\&\X_{22}\\&& \M_{22}}\begin{bmatrix} I_{n_\phi \nq} & 0 \\ I_{\nq}\otimes \A\Phi & I_{\nq}\otimes \B\Phi{} \\
        I_{\nq}\otimes\C\Phi{} & I_{\nq}\otimes \D\Phi{}{} \end{bmatrix}} &\preceq 0 \\ \label{eq:paramIQCLMI2}
    \symb{\diagmat{-\rho Y^j\\&Y^j\\&& \M}\begin{bmatrix} I_{2n_\phi \nq} & 0 \\ \A\filt & \B\filt\q+\B\filt\p \diag (\delta^j) \\[0.7mm]
        \C\filt\filty & \D\filt\filty\q + \D\filt\filty\p \diag (\delta^j) \end{bmatrix}} &\succeq 0 \\  \label{eq:paramIQCLMI3}
    Y^j-\X &\preceq 0.
  \end{align}
  Then for all $\rho\in(0,1)$ and all $(\M,\X)\in\MXset(\rho)$, any $\Delta\in\Deltaset_{\mathrm{p,ti}}$ satisfies the $\rho$-hard IQC defined by $(\rho,\filt,\M,\X)$.
\end{thm}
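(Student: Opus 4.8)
The plan is to prove the $\rho$-hard IQC~\eqref{eq:hardIQC} by a dissipativity argument that reuses the telescoping computation from the proof of Theorem~\ref{thm:ana}. I fix $\Delta\in\Deltaset_{\mathrm{p,ti}}$ with constant gain $\diag(\delta)$ and write $\delta=\sum_{j=1}^m\lambda_j\delta^j$ with $\lambda_j\geq0$, $\sum_j\lambda_j=1$. It then suffices to construct a quadratic storage $\sigma_\k$ with $\sigma_0=0$ and $\sigma_{\t+1}\leq\filtx_{\t+1}^\top\X\filtx_{\t+1}$ that satisfies the one-step dissipation inequality $\filty_\k^\top\M\filty_\k\geq\rho\sigma_\k-\sigma_{\k+1}$ along the filtered trajectory. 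Indeed, multiplying this inequality by $\rho^{\t-\k}\geq0$, summing over $\k=0,\dots,\t$, and telescoping exactly as in~\eqref{eq:tel_sum} (using $\sigma_0=0$) gives $\sum_{\k=0}^{\t}\rho^{\t-\k}\filty_\k^\top\M\filty_\k\geq-\sigma_{\t+1}\geq-\filtx_{\t+1}^\top\X\filtx_{\t+1}$, which is precisely~\eqref{eq:hardIQC}.

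The storage is built after a structural reduction that exploits $\filt=I_{2\nq}\otimes\Phi$ and, crucially, the time-invariance of $\delta$. Splitting $\filtx_\k=\begin{bmatrix}\filtx^\q_\k\\\filtx^\p_\k\end{bmatrix}$ and $\filty_\k=\begin{bmatrix}\filty^\q_\k\\\filty^\p_\k\end{bmatrix}$ along the block structure of the realization, the lower blocks are the filter $\Phi$ driven by $\p=\diag(\delta)\q$ and the upper blocks the same filter driven by $\q$. Since $\delta$ is constant and $\filtx_0=0$, an induction on $\k$ using the mixed-product rule for Kronecker products yields $\filtx^\p_\k=(\diag(\delta)\otimes I_{n_\phi})\filtx^\q_\k$ and $\filty^\p_\k=(\diag(\delta)\otimes I_{n_\sigma})\filty^\q_\k$. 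Hence $\filtx_\k=V(\delta)\filtx^\q_\k$ and $\filty_\k=W(\delta)\filty^\q_\k$ with $V(\delta)=\begin{bmatrix}I\\\diag(\delta)\otimes I_{n_\phi}\end{bmatrix}$, $W(\delta)=\begin{bmatrix}I\\\diag(\delta)\otimes I_{n_\sigma}\end{bmatrix}$, so~\eqref{eq:hardIQC} collapses to a $\rho$-hard IQC for the single filter with multiplier $\hat\M(\delta)=W(\delta)^\top\M W(\delta)$ and terminal cost $\hat\X(\delta)=V(\delta)^\top\X V(\delta)$. I therefore take $\sigma_\k=(\filtx^\q_\k)^\top\hat\P\filtx^\q_\k$ with the candidate $\hat\P=V(\delta)^\top\big(\sum_j\lambda_j Y^j\big)V(\delta)$; the terminal domination $\hat\P\preceq\hat\X(\delta)$ is then immediate from~\eqref{eq:paramIQCLMI3} (convex-combined and congruence-transformed by $V(\delta)$), and since $\filtx_{\t+1}^\top\X\filtx_{\t+1}=(\filtx^\q_{\t+1})^\top\hat\X(\delta)\filtx^\q_{\t+1}$ this delivers $\sigma_{\t+1}\leq\filtx_{\t+1}^\top\X\filtx_{\t+1}$.

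It remains to verify the one-step dissipation $(\filty^\q_\k)^\top\hat\M(\delta)\filty^\q_\k\geq\rho\sigma_\k-\sigma_{\k+1}$, which I obtain by taking the convex combination $\sum_j\lambda_j$ of the vertex inequalities~\eqref{eq:paramIQCLMI2} evaluated along the actual trajectory. Because $\B\filt\q+\B\filt\p\diag(\delta)$ is affine in $\delta$, the vertex next-states and outputs differ from the true ones only by the mismatch signals $g^j=(I_\nq\otimes\B\Phi{})\diag(\delta^j-\delta)\q_\k$ and $h^j=(I_\nq\otimes\D\Phi{}{})\diag(\delta^j-\delta)\q_\k$, which satisfy $\sum_j\lambda_j g^j=0$, $\sum_j\lambda_j h^j=0$ and are exactly the one-step response of the $\p$-filter to input $\diag(\delta^j-\delta)\q_\k$ from zero state. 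The purely quadratic remainder $\sum_j\lambda_j\big((g^j)^\top Y^j_{22}g^j+(h^j)^\top\M_{22}h^j\big)$ is shown to be $\leq0$ by first bounding $Y^j_{22}\preceq\X_{22}$ via~\eqref{eq:paramIQCLMI3} and then evaluating~\eqref{eq:paramIQCLMI1} at zero filter state and input $\diag(\delta^j-\delta)\q_\k$, which is precisely the $\p$-channel dissipation certified by $\X_{22},\M_{22}$.

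The \emph{main obstacle} is the cross term that the convex combination leaves behind, namely $2(\filtx^\q_{\k+1})^\top\sum_j\lambda_j\big(Y^j_{12}+(\diag(\delta)\otimes I_{n_\phi})^\top Y^j_{22}\big)g^j$: it is linear in the next state, weighted by the vertex-dependent $Y^j$, and of indefinite sign, so unlike the quadratic remainder it cannot merely be bounded but must be canceled. Resolving it calls for a full-block S-procedure: instead of using~\eqref{eq:paramIQCLMI1} only at zero state, I apply it along the actual $\p$-filter state $(\diag(\delta)\otimes I_{n_\phi})\filtx^\q_\k$ with the vertex input $\diag(\delta^j)\q_\k$, producing matching cross terms that complete the square so that the whole residual reduces to a negative-semidefinite quadratic in $(\filtx^\q_{\k+1},g^j,h^j)$. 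This coupling of~\eqref{eq:paramIQCLMI1} and~\eqref{eq:paramIQCLMI2} is the technical heart of the argument; note that at a vertex $\delta=\delta^j$ all mismatch signals vanish and~\eqref{eq:paramIQCLMI2} alone already yields the dissipation, so~\eqref{eq:paramIQCLMI1} and~\eqref{eq:paramIQCLMI3} are needed exactly to cover parameters in the interior of the polytope. Once the one-step dissipation is established, the telescoping argument from the first step concludes the proof.
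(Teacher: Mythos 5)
There is a genuine gap at exactly the step you flag as the technical heart. Your strategy needs a \emph{one-step} dissipation inequality at the interpolated parameter $\delta$ with the single storage matrix $\hat\P=V(\delta)^\top\bigl(\sum_j\lambda_jY^j\bigr)V(\delta)$, but the hypotheses \eqref{eq:paramIQCLMI1}--\eqref{eq:paramIQCLMI3} do not certify such an inequality, and your proposed derivation of it does not close. Convex-combining the vertex inequalities \eqref{eq:paramIQCLMI2} along the actual trajectory leaves, as you note, the cross term $2\,\filtx_{\k+1}(\delta)^\top\sum_j\lambda_jY^jg^j$; since the weights $Y^j$ are vertex-dependent this sum does not vanish even though $\sum_j\lambda_jg^j=0$, and it is linear in $\filtx_{\k+1}(\delta)$, which depends on the entire past of $\q$ and can be made arbitrarily large while $\q_\k$ (hence every $g^j$) is held fixed. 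Your proposed cure---evaluating \eqref{eq:paramIQCLMI1} along the actual $\p$-filter state with the vertex inputs---cannot supply matching terms: \eqref{eq:paramIQCLMI1} involves only the vertex-\emph{independent} matrices $\X_{22}$ and $\M_{22}$, so after convex combination every cross term it generates carries a $j$-independent weight and is annihilated by $\sum_j\lambda_j(\delta^j-\delta)=0$; nothing remains to cancel the $Y^j$-weighted (in particular the $Y^j_{12}$-weighted) residue. There is also a structural reason why no repair at the one-step level is available: concavity in $\delta$ of the one-step defect would require $-\rho\,\hat\phi_\k^\top\bar Y_{22}\hat\phi_\k+\hat\phi_{\k+1}^\top\bar Y_{22}\hat\phi_{\k+1}+\hat\sigma_\k^\top\M_{22}\hat\sigma_\k\preceq0$ with $\bar Y=\sum_j\lambda_jY^j$, and the only available comparison $Y^j_{22}\preceq\X_{22}$ from \eqref{eq:paramIQCLMI3} points the wrong way on the term carrying the minus sign.

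The paper's proof avoids this by reversing the order of your two operations: telescope first, interpolate last. Each vertex LMI \eqref{eq:paramIQCLMI2} is propagated along the \emph{virtual} vertex trajectory $\filtx(\delta^j)=V(\delta^j)\filtx^{\q}$ (same $\filtx^{\q}$, driven by the same $\q$), so no mismatch signals ever appear, and telescoping yields the accumulated vertex inequality \eqref{eq:iqc_delta_j}. Separately, telescoping \eqref{eq:paramIQCLMI1} yields the matrix inequality \eqref{eq:concave}, which says precisely that the $\delta$-quadratic coefficient $c_2$ of the accumulated expression $\sum_{\k=0}^{\t}\rho^{\t-\k}\filty_\k(\delta)^\top\M\filty_\k(\delta)+\filtx_{\t+1}(\delta)^\top\X\filtx_{\t+1}(\delta)$ is negative semidefinite. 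Concavity in $\delta$ then interpolates between the vertices in a single stroke; the linear-in-$\delta$ cross terms that defeat your pointwise-in-time argument are absorbed automatically by Jensen's inequality instead of having to be bounded. Your structural reduction $\filtx(\delta)=V(\delta)\filtx^{\q}$, $\filty(\delta)=W(\delta)\filty^{\q}$ is exactly the paper's (written there as $\hat\phi\delta$, $\hat\sigma\delta$), and your treatment of the purely quadratic remainder and of the terminal bound via $Y^j\preceq\X$ are sound, so the argument can be completed by replacing the one-step interpolation with this telescope-then-interpolate scheme.
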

\begin{proof}
  We introduce the notation $\phi^i = \ssrep{\A\Phi}{\B\Phi{}}{I}{0} \q^i$ and $\sigma^i = \Phi \q^i$ for state and output of $\Phi$ when the input $\q^i$ is the $i$th component in $\q$.
  Further, we define the block diagonal matrices $\hat\phi =\diagmat{\phi^1 \\[-2mm] & \ddots \\[-1mm] && \phi^{\nq}}$, $\hat\sigma =\diagmat{\sigma^1 \\[-2mm] & \ddots \\[-1mm] && \sigma^{\nq}}$, and $\hat\q=\diagmat{\q^1 \\[-2mm] & \ddots \\[-1mm] && \q^{\nq}}$.
  For clarity, we explicitly highlight the dependency on $\delta$ in the state $\filtx=\filtx(\delta)$ and the output $\filty=\filty(\delta)$ of $\filt$ with the input $\begin{bmatrix}\q \\ \p\end{bmatrix}$, where $\p=\diag(\delta)\q = \hat \q \delta$. 
  Note that the special structure $\filt = I_{2\nq}\otimes \Phi$ implies $\filtx(\delta) =\begin{bmatrix}\hat \phi \boldsymbol{1}_\nq \\ \hat \phi \delta \end{bmatrix}$ and $\filty(\delta) =\begin{bmatrix}\hat \sigma\boldsymbol{1}_\nq \\ \hat \sigma\delta \end{bmatrix}$ where $\boldsymbol{1}_\nq \in\R^{\nq}$ denotes the all ones vector.
  For each vertex $\delta^j$, $j=1,\dots,m$, let us multiply~\eqref{eq:paramIQCLMI2} from the left by $\begin{bmatrix}\filtx_\k(\delta^j)^\top & \q_\k^\top \end{bmatrix}$ and from the right by its transpose, then we obtain
  \begin{align}
    \symb{\diagmat{-\rho Y^j\\&Y^j\\&& \M}\begin{bmatrix} \filtx_\k(\delta^j) \\ \filtx_{\k+1}(\delta^j) \\ \filty_\k(\delta^j) \end{bmatrix}} \geq 0 
  \end{align}
  and thus
  \begin{flalign}
    -\rho\! \symb{\! Y^j \filtx_\k(\delta^j)} + \symb{\! Y^j \filtx_{\k+1}(\delta^j)} + \symb{\! \M \filty_\k(\delta^j)} \geq 0.\hspace{-1cm}&&
  \end{flalign}
  After multiplying this inequality by $\rho^{\t-\k}$, summing it from $\k=0$ to $\k=\t$, and using a telescoping sum argument similar to~\eqref{eq:tel_sum}, it follows
  \begin{align}\label{eq:iqc_delta_j}
    \symb{ Y^j \filtx_{\t+1}(\delta^j)} +\sum_{k=0}^{t} \rho^{\t-k}\symb{\M \filty_k(\delta^j)}\geq 0.
  \end{align}
  Furthermore, let us multiply $\begin{bmatrix}\hat \phi_\k^\top&\hat\q_\k^\top \end{bmatrix}$ to the left and its transpose to the right of~\eqref{eq:paramIQCLMI1}, then we obtain
  \begin{align*}
    -\rho\symb{\X_{22}\hat\phi_\k} + \symb{\X_{22} \hat\phi_{\k+1}} + \symb{\M_{22} \hat\sigma_\k} \preceq 0
  \end{align*}
  and hence, after multiplying this inequality by $\rho^{\t-\k}$, summing it from $\k=0$ to $\k=\t$, and using a telescoping sum argument similar to~\eqref{eq:tel_sum}, it follows
  \begin{align}\label{eq:concave}
    \sum_{k=0}^{t} \rho^{\t-\k} \symb{\M_{22} \hat\sigma_\k} +\symb{\X_{22} \hat\phi_{\t+1}}\preceq 0.
  \end{align}
  Since both $\filtx=\filtx(\delta)$ and $\filty=\filty(\delta)$ are linear in $\delta$, we can write
  \begin{flalign}\label{eq:iqc_afo_delta}
    &\sum_{k=0}^{t}\! \rho^{\t-k}\symb{\!\M \filty_k(\delta)}+ \symb{\!\X \filtx_{\t+1}(\delta)} \!
    = c_0  
    + c_1^\top \delta 
    +\delta^\top\! c_2 \delta \hspace{-1cm} & 
  \end{flalign}
  with suitable $c_0\in\R$, $c_1\in\R^\nq$, and $c_2 \in \R^{\nq\times\nq}$ which are independent of $\delta$.
  Considering  $\filtx(\delta) =\begin{bmatrix}\hat \phi \boldsymbol{1}_\nq \\ \hat \phi \delta \end{bmatrix}$ and $\filty(\delta) =\begin{bmatrix}\hat \sigma\boldsymbol{1}_\nq \\ \hat \sigma\delta \end{bmatrix}$, we can compute $c_2$ as
  \begin{align*}
    c_2 =\sum_{k=0}^{t} \rho^{\t-\k} \symb{\M_{22} \hat\sigma_\k} +\symb{\X_{22} \hat\phi_{\t+1}},
  \end{align*}
  and from~\eqref{eq:concave} we know that $c_2 \preceq 0$.
  This means that~\eqref{eq:iqc_afo_delta} is a concave function in $\delta$ and hence we can lower bound it by the convex combination $\delta = \sum_{j=1}^m\theta_j\delta^j$ with $\theta_j\geq 0$ and $\sum_{j=1}^m \theta_j = 1$ as follows
  \begin{align*}
    &\sum_{k=0}^{t} \rho^{\t-k}\symb{\M \filty_k(\delta)}+ \symb{\X \filtx_{\t+1}(\delta)} \\
    &\geq \sum_{j=1}^{m}\theta_j\left(\sum_{k=0}^{t} \rho^{\t-k}\symb{ \M \filty_k(\delta^j)}+ \symb{ \X \filtx_{\t+1}(\delta^j)}\right)\\
    &\refeq{\eqref{eq:paramIQCLMI3}}{\geq} \sum_{j=1}^{m}\theta_j\left(\sum_{k=0}^{t} \rho^{\t-k}\symb{\M \filty_k(\delta^j)}+ \symb{Y^j \filtx_{\t+1}(\delta^j)}\right) \refeq{\eqref{eq:iqc_delta_j}}{\geq} 0.
  \end{align*}
  This shows~\eqref{eq:hardIQC} and completes the proof as the arguments hold for all $\delta \in \mathrm{conv}\{\delta^1,\dots,\delta^m\}$, i.e., all $\Delta \in \Deltaset$, all $\q\in\ell_{2\mathrm{e}}^\nq$, all $\t\geq 0$, and all $\rho,\M,\X$ satisfying~\eqref{eq:paramIQCLMI1}--\eqref{eq:paramIQCLMI3}. 
\end{proof}

In Theorem~\ref{thm:param_iqc}, we added the LMI~\eqref{eq:paramIQCLMI3} compared to the soft IQC from~\citep[Class 6]{Veenman2016} in order to make the IQC hard and we introduced $\rho$ in~\eqref{eq:paramIQCLMI1} and~\eqref{eq:paramIQCLMI2} to make it $\rho$-hard.
\begin{rem}\label{rem:phi}
  A standard choice (cf. \citep{Veenman2016}) for the system $\Phi= \ssrep{\A\Phi}{\B\Phi{}}{\C\Phi{}}{\D\Phi{}{}}$ is $\A\Phi = \lambda I_\greeknu + \begin{bmatrix}0&0\\ I_{\greeknu-1}&0\end{bmatrix}\in\R^{\greeknu\times \greeknu}$, $\B\Phi{}=\begin{bmatrix}1 \\ 0\end{bmatrix}\in\R^\greeknu$, $\C\Phi{} = \begin{bmatrix}0 \\ I_\greeknu \end{bmatrix}\in\R^{\greeknu+1 \times \greeknu}$, and $\D\Phi{}{}=\begin{bmatrix}1 \\ 0\end{bmatrix}\in\R^{\greeknu+1}$ for some $\lambda \in (-1,1)$ and for some natural number $\greeknu$ leading to the dimensions $n_\phi=\nu$, $n_\sigma = \greeknu+1$.
  Then, one can decide on an order $\greeknu$ and perform a second line search over $\lambda$ to find the $\rho(\lambda)$ that leads to the best upper bound $\gamma^\star (\rho(\lambda))$ in~\eqref{eq:sdp}.
\end{rem}
\begin{rem}\label{rem:rieber}
  If we use Theorem~\ref{thm:ana_pointwise} together with the pointwise IQC from Theorem~\ref{thm:param_iqc_tv} for $\Delta\in\Deltaset_{\mathrm{p,tv}}$, then we recover the conditions used in \citep[Theorem~4, Remark~5]{Rieber2008} as a special case. In particular, the LMIs~\eqref{eq:ana_LMI},~\eqref{eq:ana_LMI2_pointwise}, $\M=\M_1\in\Mset$, and $\M_2\in\Mset$ correspond in this order to \citep[(27),~(28),~(25), and~(26)]{Rieber2008} with the necessary modifications in the notation. 
  In the time-invariant case $\Delta\in\Deltaset_{\mathrm{p,ti}}$, the approach in~\citep{Rieber2008} cannot exploit the time invariance and uses the same bounds as for $\Delta\in\Deltaset_{\mathrm{p,tv}}$, whereas with the IQC approach, one can use the dynamic IQC from Theorem~\ref{thm:param_iqc} to obtain improved bounds on $\gamma$ (cf. the example in Section~\ref{sec:exmp}).
\end{rem}

Finally, for $\Deltaset_{\mathrm{p,tv,g}}$, we can use the pointwise IQC from~\citep[Class~11]{Veenman2016} with $\alpha=1$.

\begin{thm}\!(Pointwise IQC for $\Deltaset_{\mathrm{p,tv}}$). \ \ \label{thm:param_iqc_tv_g}
  Define $\filt=I_{\nq+\np}$ and $\Mset=\{\varepsilon\diag(I_{\nq},-I_\np)\mid \varepsilon\geq 0\}$.
  Then all $\Delta\in\Deltaset_{\mathrm{p,tv,g}}$ satisfy the pointwise IQC defined by $(\filt, \M)$ for all $\M\in\Mset$.
\end{thm}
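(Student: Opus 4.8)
The plan is to verify the pointwise IQC inequality $\filty_\k^\top \M \filty_\k \geq 0$ directly from the definitions, since with this trivial filter and indefinite multiplier the statement is essentially a rewriting of the defining gain bound. First I would note that the filter $\filt = I_{\nq+\np}$ is static and simply passes its input through, so according to Definition~\ref{defn:iqc} the filter output is $\filty_\k = \begin{bmatrix}\q_\k \\ \p_\k\end{bmatrix}$ for every $\k \in \N$.

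Next I would substitute the structure of the multiplier. For any $\M = \varepsilon\diag(I_{\nq},-I_\np) \in \Mset$ with $\varepsilon \geq 0$, a direct computation gives $\filty_\k^\top \M \filty_\k = \varepsilon\bigl(\q_\k^\top \q_\k - \p_\k^\top \p_\k\bigr) = \varepsilon\bigl(\|\q_\k\|^2 - \|\p_\k\|^2\bigr)$.

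Then I would invoke the gain bound defining $\Deltaset_{\mathrm{p,tv,g}}$. Since $\Delta \in \Deltaset_{\mathrm{p}}$ acts pointwise as $\p_\k = \Delta_\k \q_\k$ with $\Delta_\k^\top \Delta_\k \preceq I$, it follows that $\|\p_\k\|^2 = \q_\k^\top \Delta_\k^\top \Delta_\k \q_\k \leq \q_\k^\top \q_\k = \|\q_\k\|^2$. Combined with $\varepsilon \geq 0$, this yields $\filty_\k^\top \M \filty_\k \geq 0$ for all $\k \in \N$ and all $\M \in \Mset$, which is precisely the pointwise IQC defined by $(\filt,\M)$.

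There is no genuine obstacle here; the only points requiring care are the sign convention of the multiplier---the $I_{\nq}$ block must pair with $\q$ and the $-I_\np$ block with $\p$ so that the inequality has the correct orientation---and the nonnegativity of $\varepsilon$. Both are built into the definition of $\Mset$, so the argument reduces to the single estimate $\|\Delta_\k \q_\k\| \leq \|\q_\k\|$ inherited from $\Delta_\k^\top \Delta_\k \preceq I$.
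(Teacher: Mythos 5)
Your proposal is correct and follows essentially the same route as the paper's proof: identify the filter output as $\begin{bmatrix}\q_\k^\top & \p_\k^\top\end{bmatrix}^\top$, expand the quadratic form to $\varepsilon\,\q_\k^\top(I_{\nq}-\Delta_\k^\top\Delta_\k)\q_\k$, and conclude nonnegativity from the gain bound $\Delta_\k^\top\Delta_\k\preceq I$ and $\varepsilon\geq 0$. No gaps.
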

\begin{proof}
  Due to $\filt=I_{\nq+\np}$ and $\M\in\Mset$ it is $\filty_k=\begin{bmatrix}\q_\k^\top & \p_\k^\top \end{bmatrix}^\top$ and
  \begin{align*}
    \symb{\M \begin{bmatrix}\q_k \\ \p_k \end{bmatrix}}&=\symb{\diagmat{\varepsilon I_\nq \\ &-\varepsilon I_\np} \begin{bmatrix}\q_k \\ \Delta_\k \q_k \end{bmatrix}}\\&=\q_k^\top\varepsilon( I_\nq-\Delta_k^\top \Delta_k) \q_k\geq 0.\\[-0.8cm]
  \end{align*}
\end{proof}
\begin{rem}
  Our problem setup with $\Delta\in\Deltaset_{\mathrm{p,tv,g}}$ is equivalent to the problem considered in~\citep{Ji2007}, if we set $\Delta_k=F(k)$ $\C\G\q=N_a$, $\B\G\p=M$, (the $F(k)$, $N_a$, and $M$ from~\citep{Ji2007}), $\D\G\q\p=0$, and $\D\G\q\w=0$.
  If we use Theorem~\ref{thm:ana_pointwise} together with the pointwise IQC from Theorem~\ref{thm:param_iqc_tv_g} and fix $\M_2=0$ and $\mu=1-\rho$, we obtain the equivalent conditions as used in \citep[Theorem~3]{Ji2007} to compute an upper bound on the robust peak-to-peak gain. In particular, the LMIs~\eqref{eq:ana_LMI},~\eqref{eq:ana_LMI2_pointwise} correspond to \citep[(12),~(14)]{Ji2007} with the necessary transformations and modifications in the notation.
  Hence, in the special case of $\Delta\in\Deltaset_{\mathrm{p,tv,g}}$ our approach reduces to~\citep{Ji2007}.
\end{rem}

It is important to note that many more types of uncertainties can be characterized with the help of $\rho$-hard IQCs, see for example~\citep{Schwenkel2022a} for a pointwise IQC for uncertain time delays or~\cite{Boczar2019} for $\rho$-hard IQCs on sector and slope restricted nonlinearities.
\section{Numerical Examples}\label{sec:exmp}
In the first example, we compute the robust peak-to-peak gain and compare Theorem~\ref{thm:ana}, Theorem~\ref{thm:ana_pointwise}, and the approach in~\citep{Rieber2008}.
In the second example, we perform a reachability analysis and compare Theorem~\ref{thm:rpi} to the approach in~\citep{Jaoude2021}.
\begin{exmp}\label{exmp:rieber}
  We consider the following MIMO system $\G$ taken from~\citep{Rieber2008}
  \begin{align*}
    \left[\begin{array}{c:c:c}\A\G & \B\G\p & \B\G\w \\[0.5mm] \hdashline\rule{0pt}{3.5mm} \C\G\q & \D\G{\q}\p & \D\G\q\w\\[0.5mm] \hdashline\rule{0pt}{3.5mm} \C\G\z & \D\G\z\p & \D\G\z\w \end{array}\right] = 
    \left[\begin{array}{cc:cc:cc}0.2 & 0.01 & 0.1 & 0.2 & 3 & 2 \\
      -0.1 & -0.01 & 0.3 & -0.2 & 3 & 1 \\ \hdashline\rule{0pt}{3.5mm}
      0.2 & -0.3 & 0.4 & 0.3 & 3 & 1 \\
      0.8 & 0.5 & -0.6 & 0.1 & 2 & 7 \\ \hdashline\rule{0pt}{3.5mm} 
      2 & 1 & 1 & 2 & 1 & -2 \\
      2 & 3 & -1 & 4 & -4 & 3 \end{array}\right],
  \end{align*}
  which is in feedback with $\Delta \in \Deltaset_\mathrm{p,ti}$ according to~\eqref{eq:sys},~\eqref{eq:Delta}, with $\Deltaset_\mathrm{p,ti}$ defined in~\eqref{eq:Delta_pti} with
  \begin{align*}
    \delta^1=\begin{bmatrix}-0.1\\ -0.3\end{bmatrix}, \ \delta^2=\begin{bmatrix}-0.1\\ 0.6\end{bmatrix},\ \delta^3=\begin{bmatrix}0.5\\ -0.3\end{bmatrix}, \ \delta^4=\begin{bmatrix}0.5\\ 0.6\end{bmatrix}.
  \end{align*}
  To compute an upper bound on the peak-to-peak gain, we use the analysis in Theorem~\ref{thm:ana} in combination with the IQC from Theorem~\ref{thm:param_iqc} with $\Phi$ from Remark~\ref{rem:phi} with $\greeknu=2$ and $\lambda=-0.25$.
  Then, we solve the semi-definite program~\eqref{eq:sdp} with the LMI parser YALMIP~\citep{Lofberg2004} and the solver~\cite{mosek} and obtain the bound $\gamma \leq 60.61$ for $\rho = 0.18$.
  For comparison, we found the lower bound $\gamma \geq 59.41$ on the peak-to-peak gain by maximizing the peak of the output $\|\z\|_\peak\leq \gamma $ over $\Delta\in\Deltaset_{\mathrm{p,ti}}$ and $\w$ with $\|\w\|_\peak \leq 1$.
  This maximization problem is non-convex and hence we cannot expect that the local maximum we found results in a tight lower bound.
  The local maximum was obtained by using MATLAB's \texttt{fmincon} and starting the optimization at the initial guess $\delta_1=0.5$, $\delta_2=0.6$ and $\w_\k=\frac {1}{\sqrt{2}}\begin{bmatrix}1 & 1\end{bmatrix}^\top $ for all $\k\in[0,8]$.
  If we instead use the result in~\citep{Rieber2008}\footnote{Specifically Theorem~4 in combination with Remark~5} to compute an upper bound $\gamma$ on the peak-to-peak gain, then we obtain $\gamma \leq 66.93$, which shows that the proposed IQC-based improves the state-of-the-art approach significantly.
  The reason for this significant improvement lies in the fact, that the approach of~\citep{Rieber2008} is equivalent to using Theorem~\ref{thm:ana_pointwise} with the pointwise IQC from Theorem~\ref{thm:param_iqc_tv} for $\Delta\in\Deltaset_{\mathrm{p,tv}}$ (cf. Remark~\ref{rem:rieber}), i.e., the knowledge about the time-invariance of $\Delta$ is not exploited in their approach. 
  So if we allow the uncertain parameters to be time varying, i.e., $\Delta\in\Deltaset_{\mathrm{p,tv}}$ as defined in~\eqref{eq:Delta_ptv}, then we obtain with both approaches the same upper bound $\gamma \leq 66.93$.
  We can also use Theorem~\ref{thm:ana} in combination with the pointwise IQC from Theorem~\ref{thm:param_iqc_tv}, as pointwise IQCs imply $\rho$-hard IQCs with $\X=0$ for all $\rho \in (0,1)$. 
  Then, we obtain the best resulting bound $\gamma\leq 67.81$ for $\rho=0.23$. 
  This shows that Theorem~\ref{thm:ana_pointwise} indeed yields less conservative results than Theorem~\ref{thm:ana} if the IQC holds pointwise. 
  Again, for comparison we computed a lower bound $\gamma \geq 65.66$ using MATLAB's \texttt{fmincon} and starting the optimization at the initial guess of our previously found time-invariant worst case.
  All obtained bounds on $\gamma$ are summarized in Table~\ref{tab:gamma}.
  
  \begin{table}
    \begin{tabular}{ccccc}
      \toprule
      & \!\!\cite{Rieber2008}\!\! & Theorem~\ref{thm:ana} & Theorem~\ref{thm:ana_pointwise} & lower bound \\ \midrule
      $\Delta \in \Deltaset_{\mathrm{p,ti}}$ & $\gamma \leq 66.93$ &  $\gamma \leq 60.61$ &  $\gamma \leq 66.93$ &  $\gamma \geq 59.41$ \\
      $\Delta \in \Deltaset_{\mathrm{p,tv}}$ & $\gamma \leq 66.93$ &  $\gamma \leq  67.81$ &  $\gamma \leq 66.93$ &  $\gamma \geq 65.66$ \\\bottomrule
    \end{tabular}\\[0.01cm]
    \caption{\makebox{Best obtained bounds on $\gamma$.}}\label{tab:gamma}
  \end{table}
\end{exmp}
\begin{exmp}\label{exmp:jaude}
  We consider the following MIMO system $\G$ taken from~\citep{Jaoude2021}
  \begin{align*}
    \left[\begin{array}{c:c:c}\A\G & \B\G\p & \B\G\w \\[0.5mm] \hdashline\rule{0pt}{3.5mm} \C\G\q & \D\G{\q}\p & \D\G\q\w \end{array}\right] = 
    \left[\begin{array}{ccc:cc:cc}.05 & -.2 & .3 & .2 & .1 & .5 & .1 \\
      .1 & .8 & .2 & .5 & -.4 & -.3 & -.7 \\
      -.2 & .5 & -.1 & -.3 & -.2 & .5 & -.2 \\ \hdashline\rule{0pt}{3.5mm}
      1 & -.5 & .3 & .1 & .6 & -.5 & .4 \\
      .9 & .2 & -.5 & .6 & -.9 & .3 & .1 \end{array}\right],
  \end{align*}
  which is in feedback with $\Delta \in \Deltaset_{\mathrm{p,ti}}$ with $\delta^1 = \begin{bmatrix}-0.3 & -0.3\end{bmatrix}^\top$ and $\delta^2 = \begin{bmatrix}0.3 & 0.3\end{bmatrix}^\top$ and is subject to disturbances $\|\w\|_\infty\leq 0.5$, which implies $\|\w\|_\peak\leq \sqrt{0.5}$. 
  Using Theorem~\ref{thm:rpi} with the $\rho$-hard IQC from Theorem~\ref{thm:param_iqc} and $\Phi$ from Remark~\ref{rem:phi} with $\greeknu=2$ and $\lambda =0.2$, we can compute $Q\in\S^\nx$ such that $\|\x_\k\|_{\tilde Q}\leq 1$ for all $\k\in\N$ with $\tilde Q=Q\|\w\|_\peak^{-2}$.
  We minimize the volume of the ellipsoidal outer approximation by minimizing $-\log\det (Q)$, which yields\footnote{To avoid numerical problems, we rescaled LMI~\eqref{eq:ana_LMI} by using $\hat \mu=1000 \mu$ and restored $Q=\frac{1}{1000} \hat Q$.} $-\log\det \left(\tilde Q \right) = 7.04$ for $\rho=0.96$.
  \cite{Jaoude2021} used an approach based on pointwise IQCs to obtain $-\log\det \left(\tilde Q \right) = 8.38$.
  Note that $\|\w\|_\peak\leq \sqrt{0.5}$ is a conservative bound on  $\|\w\|_\infty\leq 0.5$ and yet, the proposed approach using $\rho$-hard IQCs with a terminal cost and $\|\w\|_\peak\leq \sqrt{0.5}$ can reduce the conservatism significantly compared to using pointwise IQCs and $\|\w\|_\infty\leq 0.5$.
\end{exmp}
\section{Conclusion and outlook}
In this work we presented a framework to guarantee an upper bound on the robust peak-to-peak gain of a linear system that is in feedback with an uncertain component.
The key tool we used in our analysis were $\rho$-hard IQCs with a terminal cost.
We conjecture that analogue results may hold in continuous time. 
Future work concern robust controller \emph{synthesis} for peak-to-peak gain minimization.

{\small \bibliography{my_bib}}
\end{document}